\newtheorem{theorem}{Theorem}
\newtheorem{lemma}[theorem]{Lemma}
\newtheorem{problem}{Problem}
\newcommand{\mm}{\mathcal{M}}
\newcommand{\nn}{\mathbb{N}}
\newcommand{\rr}{\mathbb{R}}
\newcommand{\todo}[1]{\marginpar{\textcolor{red}{***}}\textcolor{red}{{\bf TODO:} #1}}
\newcommand{\um}{\mu_{\max}}
\newcommand{\mumax}{\ensuremath{\mu_{\max}}}
\DeclareMathOperator{\inedges}{in}
\DeclareMathOperator{\outedges}{out}
\DeclareMathOperator{\len}{len}
\begin{document}
%
\title{On Systematic Testing for Execution-Time Analysis}

\author{\IEEEauthorblockN{Daniel Bundala}
\IEEEauthorblockA{UC Berkeley\\
Email: bundala@berkeley.edu}
\and
\IEEEauthorblockN{Sanjit A. Seshia}
\IEEEauthorblockA{UC Berkeley\\
Email: sseshia@berkeley.edu}}


%


\maketitle

\begin{abstract}
Given a program and a time deadline, does the program finish before the deadline when executed on a given platform?
With the requirement to produce
a test case when such a violation can occur, we refer to this problem as the worst-case execution-time testing (WCETT) problem. 

In this paper, we present an approach for solving the WCETT problem for loop-free programs by
timing the execution of a program on a small number of carefully
calculated inputs. We then create a sequence of integer linear
programs the solutions of which encode the best timing model
consistent with the measurements. By solving the programs we can find the
worst-case input as well as estimate execution time of any other
input.  Our solution is more accurate than previous approaches and, unlikely
previous work, by increasing the number of measurements we can
produce WCETT bounds up to any desired accuracy. 

Timing of a
program depends on the properties of the platform it executes on. We
further show how our approach can be used to quantify the timing
repeatability of the underlying platform. 
\end{abstract}


%
\IEEEpeerreviewmaketitle


\section{Introduction}
\label{sec:intro}

Execution-time analysis is central to the design and verification
of real-time embedded systems. 
In particular, over the last few decades, much work has been done on 
estimating the worst-case execution time (WCET) (see,
e.g.~\cite{li-book99,wilhelm-tecs07,leeseshia-11}).  
Most of the work on this topic has centered on techniques for finding
upper and lower bounds on the execution time of programs on particular
platforms. 
Execution time analysis is a challenging problem due to the 
interaction between 
the interaction of a large space of program paths with the complexity
of underlying platform (see, e.g.,~\cite{lee-tr07,kirner-isorc08}).
Thus, WCET estimates can sometimes be either too pessimistic
(due to conservative platform modeling) 
or too optimistic (due to unmodeled features of the platform). 

The above challenges for WCET analysis can limit its applicability in
certain settings. One such problem is to verify, given a program $P$, a
target platform $H$, and a deadline $d$, whether $P$ can violate 
deadline $d$ when executed on $H$ --- with the requirement to produce
a test case when such a violation can occur. We refer to this problem
as the worst-case execution-time testing (WCETT) problem. 

Tools that compute
conservative upper bounds on execution time have two limitations in
addressing this WCETT problem:
(i) if the bound is bigger
than $d$, one does not know whether the bound is too loose or whether 
$P$ can really violate $d$, and 
(ii) such tools typically aggregate states for efficiency and hence
do not produce counterexamples.
Moreover, such tools rely on having a fairly detailed timing model of the
hardware platform (processor, memory hierarchy, etc.). In some
industrial settings, due to IP issues, hardware details are not
readily available, making the task much harder for timing analysis
(see e.g, this NASA report for more details~\cite{nasa-toyotaUA-tr11});
in such settings, one needs an approach to timing analysis 
that can work with a ``black-box'' platform.

In this paper, we present an approach to systematically test a
program's timing behavior on a given hardware platform. Our approach
can be used to solve the WCETT problem, in that it not only 
predicts the execution times of the worst-case (longest) program path,
but also produces a suitable test case. It can also be adapted to
produce the top $K$ longest paths for any given $K$. Additionally,
the timing estimate for a program path comes with a ``guard band'' or
``error bound'' 
characterizing the approximation in the estimate --- in all our
experiments, true value was closed to the estimate and well within the
guard band.

Our approach builds upon on prior work on the GameTime 
system~\cite{seshia-iccad08,seshia-acmtecs12}. GameTime allows
one to predict the execution time of any program path without running
it by measuring a small sample of ``basis paths'' and learning a
timing model of the platform based on those measurements. 

The advantage of the GameTime approach is that the platform timing model
is automatically learned from end-to-end path measurements, and thus it is easy to
apply to any platform. However, the accuracy of GameTime's estimate
(the guard bands) depend on the knowledge of a platform parameter $\mumax$
which bounds the cumulative variation in the timing of instructions
along a program path from a baseline value.

For example, a load
instruction might take just $1$ cycle if the value is in a register,
but several $10$s of cycles if it is in main memory and not cached.
The parameter $\mumax$ can be hard to pre-compute based on
documentation of the processors ISA or even its implementation, if
available. 

Our approach shares GameTime's ease of portability to any
platform, and like it, it is also suitable for black-box platforms. 
However, rather than depending on knowledge of $\mumax$, we
show how one can compute the guard bands using an integer linear
programming formulation. Experimental results show that our approach
can be more accurate than the original GameTime
algorithm~\cite{seshia-acmtecs12}, at a small extra computational cost. 
Moreover, our algorithm
is \emph{tunable}: depending on the desired accuracy specified by a
user, the algorithm measures more paths and yields more precise
estimate; possibly measuring all paths if perfect accuracy is
requested.  Finally, we also show how to estimate the parameter $\mumax$.





\section{Preliminaries}
\label{section:model}
Our solution to the problem is an extension of the measurement-based GameTime approach~\cite{original_paper}. We now present the model used in~\cite{original_paper} as well as in this paper. 

\subsection{Model}

To decide whether a program can exceed a given time limit~$d$, it suffices to decide whether the worst input exceeds the limit~$d$. However, recall that without any restrictions on the program, when the program contains unbounded loops or recursion, even determining whether the program terminates not even the number of steps it performs is undecidable. Therefore, we consider only deterministic programs with bounded loops and no recursion. We did not find this limitation restricting as reactive controllers are already often written with this limitation in mind.

Given a computer program, one can associate with it the control-flow graph (CFG) in the usual way (Figure~\ref{fig:cfg}); vertices representing locations and edges the basic blocks (straight-line code fragments) of the program. Since we assume that the loops are bounded and there is no recursion, the loops can be unrolled and all function calls inlined. Thus, the resulting CFG is always a directed and an acyclic graph (DAG).

\begin{figure}
\centering
\lstset{language=C,frame=tb} 
\begin{lstlisting}
int f(int x) {
  if (x % 2 == 0) {
    if (x & 101) {
      x++;
    } else {
      x+=7;
    }
  } 
  return x;
}
\end{lstlisting}

\begin{tikzpicture}[->,>=stealth',bend angle=30,auto]
\tikzstyle{every state}=[inner sep=0mm,circle,outer sep=-3.5mm, draw=none]

  \node (s0) at (0,0) [state, initial above, initial text=]{};
  \node (s1) at (2,-1) [state]{};
  \node (s2) at (1,-2) [state]{};
  \node (s3) at (3,-2) [state]{};
  \node (s4) at (2,-3) [state]{};
  \node (s5) at (0,-4) [state]{};
  \node (s6) at (0,-5) [state]{};   
\path
  (s0) edge [above] node [] {$[x \% 2]$} (s1)
  (s1) edge [right] node [] { $ [x \& 101]$} (s3)
  (s1) edge [left] node [] {$[!(x \& 101)]$} (s2)  
  (s2) edge [left] node [] {$x+=7$} (s4)  
  (s3) edge [right] node [] {$x++$} (s4)  
  (s4) edge [right] node [] {} (s5)  
  (s0) edge [bend right, left] node [] {$[!(x\%2)]$} (s5)  
  (s5) edge [right] node [] {return $x$} (s6)                       
  ;
\end{tikzpicture}
\caption{Source code (top) and its corresponding control flow graph (bottom)}
\label{fig:cfg}
\end{figure}
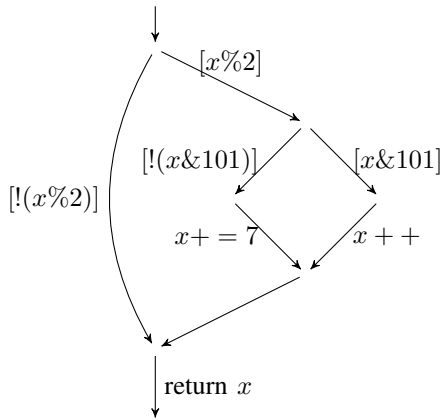

Given a DAG $G = (V,E)$, with the set of vertices $V$, set of edges $E$, we designate two vertices: source~$s$ and sink~$t$ corresponding to the entry and exit points of the program, respectively.

For vertex $v$ of graph, we use $\inedges(v)$ to denote the set of incoming edges to $v$ and $\outedges(v)$ to denote the set of outgoing edges from $v$. 

To model~\cite{original_paper} the execution times of a program, we associate with each edge $e \in E$ cost $w_e$. The cost $w_e$ models the (baseline) execution time of the particular statement the edge $e$ corresponds to. 

As described in the introduction, we measure only execution times of entire source-to-sink paths and not of individual statements. Given a source-to-sink path $x$, the baseline execution time of the path $x$ is $w_x = \sum_{e\in x} w_e$ where the sum is over all edges present in $x$. However, due to caching, branch misses, etc. the actual execution time differs from the baseline and thus the execution time (the \emph{length of the path}) can be modeled as $$w_x = \sum_{e\in x} w_e + d_x$$ where the term $d_x$ denotes the variation from the baseline value. The term $d_x$ is a function of the input and the context the program runs in. This is known as the bounded path-dependent repeatability condition~\cite{wasson}.

The length of path $x$ is denoted $w_x$. Observe that different inputs corresponding to the same path in general take different time to execute. However, we assume that $|d| \leq \mu_x$ for some $\mu_x \in \rr$. We denote the maximum $\max_x \mu_x$ by $\um$. The value $\um$ is a measure of timing repeatability. If $\um=0$ then the system is perfectly time repeatable. In general, the larger the value of $\um$, the less repeatable the timing of the system is.

The aim of this paper is to find a path $x$ such that $w_x$ is maximal. The algorithm in~\cite{original_paper} as well as ours, does not require the knowledge of $\mu_x$'s or $\um$ to find the worst-case execution input. However, the accuracy of the algorithms depend of $\um$, as that is inherent to the timing of the underlying hardware,

Our algorithm carefully synthesizes a collection of inputs on which it runs the given program and \emph{measures} the times it takes to execute each of them. Then, using these measurements, it estimates the length of the longest path. 

Formally, the pair $(x, w_x)$ consisting of a path $x$ and its length $w_x$ is called a \emph{measurement}. We denote the length of the longest path by $w_M$. 

To summarize, throughout the paper we use the following notation.
\begin{itemize}
\item $G$ - underlying DAG
\item $S$ - set of measured paths
\item $\mm = \{ (x_i, l_i) \}$ - set of measurements consisting of pairs path $x_i \in S$ and the observed length $l_i$
\end{itemize}

It was shown in~\cite{original_paper} how, using only $|E|$ measurements of source-to-sink paths, to find an input corresponding to a path of length at least $w_M - 2 |E| \um$. In particular, if the longest path is longer than the second longest path by at least $2 |E| \um$, the algorithm in~\cite{original_paper} in fact finds the longest path. Thus, we say that the ``accuracy'' of the algorithm is $2 |E| \um$. In this paper, we show how to (i) improve the accuracy without increasing the number of measurements, (ii) by increasing the number of measurements improve the accuracy even further, (iii) how to estimate the timing repeatability of the underlying platform (as captured by~$\um$).

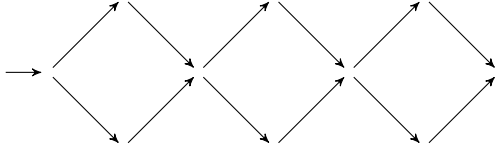
\begin{figure}
\centering
\begin{tikzpicture}[->,>=stealth',bend angle=30,auto]
\tikzstyle{every state}=[inner sep=0mm,circle,outer sep=-3.5mm, draw=none]

  \node (s0a) at (0,0) [state, initial left, initial text=]{};
  \node (s1a) at (1,1) [state]{};
  \node (s2a) at (1,-1) [state]{};
  \node (s3a) at (2,0) [state]{};

  \node (s1b) at (3,1) [state]{};
  \node (s2b) at (3,-1) [state]{};
  \node (s3b) at (4,0) [state]{};

  \node (s1c) at (5,1) [state]{};
  \node (s2c) at (5,-1) [state]{};
  \node (s3c) at (6,0) [state]{};

\path
  (s0a) edge [] node [] {} (s1a)
  (s0a) edge [] node [] {} (s2a)
  (s1a) edge [] node [] {} (s3a)  
  (s2a) edge [] node [] {} (s3a)  

  (s3a) edge [] node [] {} (s1b)
  (s3a) edge [] node [] {} (s2b)
  (s1b) edge [] node [] {} (s3b)  
  (s2b) edge [] node [] {} (s3b)  
  
  (s3b) edge [] node [] {} (s1c)
  (s3b) edge [] node [] {} (s2c)
  (s1c) edge [] node [] {} (s3c)  
  (s2c) edge [] node [] {} (s3c)

  ;
\end{tikzpicture}

\caption{DAG with exponentially many paths}
\label{fig:exp}
\end{figure}

Our algorithm as well as the algorithm in~\cite{original_paper} measures the length of some paths and then estimates the lengths of other paths based on those measurements. Note that as long as not all the lengths of all the paths are measured, the inaccuracy of estimates is unavoidable. Consider for example the graph in Figure~\ref{fig:exp} and assume the graph consists of $n$ ``diamonds''. Clearly, there are $2^n$ source-to-sink paths in the graph.

Assume that $w_e = 1$ for each edge and $\mu_x = 0$ for all paths $x$ except for one path $y$ for which $\mu_y = \um > 0$.  Now, suppose we measure the lengths of some collection of paths~$S$. As long as~$S$ does not contain~$y$, the length of every observed path is $2n$. Hence, any length of $w_y$ of $y$ in the interval $[2n - \um, 2n + \um]$ is consistent with the measurement. Therefore, in the worst case, the best achievable estimate of the length of $w_y$ can always be at least $\um$ from the real answer. 



\footnote{In general, however, it holds that the more paths are included in $S$ the better is the estimate of the longest path.}

We now briefly describe the algorithm in~\cite{original_paper}; we skip the standard technical details such as CFG extraction or how to find an input corresponding to a given path and focus only on how to extract the longest path.

Let $m$ be the number of edges in $E$. Then, by numbering the edges in $E$, one can think of each path $x$ as a vector $p_x$ in $\rr^m$ such that $$p_x(i) = \left\{\begin{array}{rl}0 & \textrm{if ith edge is not used in $x$} \\ 1 & \textrm{if ith edge is used in $x$} \\ \end{array} \right.$$

Now, given two paths $x$ and $y$, one can define the linear combination $a \cdot x + b\cdot y$ for $a,b\in\rr$ in the natural (component wise) way. Thus, one can think of paths as points in an $m$-dimensional vector space over~$\rr$. In particular, it was show in~\cite{original_paper} that there is always a \emph{basis}~$B$ of at most~$m$ source-to-sink paths such that each source-to-sink path $x$ can be written as a linear combination of paths from~$B$: $$p_x = \sum_{b \in B} c_b \cdot p_b$$ where $c_b$'s are coefficients in $\rr$. Moreover, using theory of 2-barycentric spanners~\cite{spanner}, it was shown that $B$ can be chosen in such a way that for any path $x$, it always holds that $|c_b| \leq 2$ for every $b\in B$. The paths in $B$ are called the \emph{basis paths} as they suffice to express any other path. 

Now, if the path $p_x$ can be written as $p_x = \sum_{b \in B} c_b \cdot p_b$ then its estimated (baseline) length is $$w_x = \sum_{b \in B} c_b \cdot w_b$$ where $w_b$'s are the measured lengths of the basis paths. 

The algorithm thus runs the program on the inputs corresponding to the basis paths in order to measure the lengths $p_b$ for each $b\in B$. Moreover, it was shown in~\cite{original_paper} how, by encoding the problem as an integer-linear-program (ILP) instance, to find path $X$ such that the corresponding estimated length: $p_X = \sum_{b \in B} c_b \cdot w_b$ is maximized\footnote{In case the resulting path is infeasible in the program, a constraint is added into the ILP and the ILP is solved again.}.

Consider the accuracy of the estimated length of~$p_X$. By construction, $p_X = \sum_{b \in B} c_b \cdot p_b$. Hence, $$\sum_{e \in X} w_e = \sum_{b \in B} c_b \sum_{e \in b} w_e.$$ Further, for $b \in B$, we have $w_b = \sum_{e \in b} w_e + d_b$. Hence,

\begin{eqnarray*}
\left|w_X - \sum_{b \in B} c_b \cdot w_b\right| & = & \left| \sum_{e \in X} w_e + d_X -  \sum_{b \in B} c_b \cdot w_b\right|\\
& = & \left| \sum_{e \in X} w_e + d_X - \sum_{b \in B} c_b (\sum_{e \in b} w_e + d_b)\right| \\
& = & \big|\sum_{e \in X} w_e - \sum_{b \in B} c_b \sum_{e \in b} w_e \\
& & + d_X - \sum_{b \in B} c_b \cdot d_b\big| \\
& \leq & \left| d_X - \sum_{b \in B} c_b \cdot d_b\right| \\
& \leq & (2 |B| + 1) \um
\end{eqnarray*}

Thus, the algorithm in~\cite{original_paper}, finds the longest path (and the corresponding input) only up to the error term of $(2|B|+1)\um$, under certain assumptions outlined in~\cite{original_paper}. A challenge, as noted earlier, is that it is not easy to estimate the value of $\um$. Consider the first four columns in Table~\ref{table:results}. The second column in the table shows the lengths of the  longest path as estimated by the algorithm in~\cite{original_paper}. However, the third column shows the actual length of the path that is measured when the program is executed on the corresponding inputs. Notice that in some cases the prediction does not match the measured time. Also, the algorithm in~\cite{original_paper} does not provide any error bounds on the estimates, thereby making the predicted values less useful.

In this paper we show how, given the exactly same set of measurements as in~\cite{original_paper}, we can find a tighter estimate and how to incorporate the knowledge of additional measurements to obtain even tighter bounds. In fact, for the benchmarks given in Table~\ref{table:results}, we not only obtain more accurate predictions of running time, we can also give error bounds for our estimates.

\section{Algorithm}
\subsection{Overview}

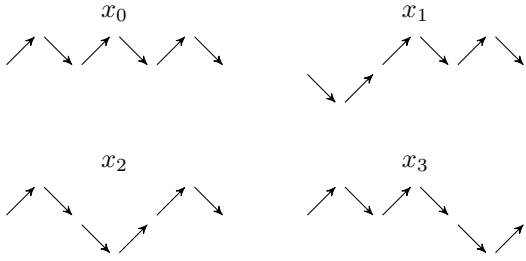
\begin{figure}
\centering
\begin{tikzpicture}[->,>=stealth',bend angle=30,auto,scale=0.5]
\tikzstyle{every state}=[inner sep=0mm,circle,outer sep=-3.5mm, draw=none]

  \node (s0a) at (0,0) [state]{};
  \node (s1a) at (1,1) [state]{};
  \node (s3a) at (2,0) [state]{};

  \node (s1b) at (3,1) [state]{};
  \node (s1bt) at (3,1.5) [state]{$x_0$}{};
  \node (s3b) at (4,0) [state]{};

  \node (s1c) at (5,1) [state]{};
  \node (s3c) at (6,0) [state]{};

  \node (t0a) at (8,0) [state]{};
  \node (t1a) at (9,-1) [state]{};
  \node (t3a) at (10,0) [state]{};

  \node (t1b) at (11,1) [state]{};
  \node (t1bt) at (11,1.5) [state]{$x_1$};
  \node (t3b) at (12,0) [state]{};

  \node (t1c) at (13,1) [state]{};
  \node (t3c) at (14,0) [state]{};

  \node (u0a) at (0,-4) [state]{};
  \node (u1a) at (1,-3) [state]{};
  \node (u3a) at (2,-4) [state]{};

  \node (u1b) at (3,-5) [state]{};
  \node (u1bt) at (3,-2.5) [state]{$x_2$};
  \node (u3b) at (4,-4) [state]{};

  \node (u1c) at (5,-3) [state]{};
  \node (u3c) at (6,-4) [state]{};

  \node (v0a) at (8,-4) [state]{};
  \node (v1a) at (9,-3) [state]{};
  \node (v3a) at (10,-4) [state]{};

  \node (v1b) at (11,-3) [state]{};
  \node (v1bt) at (11,-2.5) [state]{$x_3$};
  \node (v3b) at (12,-4) [state]{};

  \node (v1c) at (13,-5) [state]{};
  \node (v3c) at (14,-4) [state]{};

\path
  (s0a) edge [] node [] {} (s1a)
  (s1a) edge [] node [] {} (s3a)    

  (s3a) edge [] node [] {} (s1b)
  (s1b) edge [] node [] {} (s3b)    
  
  (s3b) edge [] node [] {} (s1c)
  (s1c) edge [] node [] {} (s3c)    

  (t0a) edge [] node [] {} (t1a)
  (t1a) edge [] node [] {} (t3a)    

  (t3a) edge [] node [] {} (t1b)
  (t1b) edge [] node [] {} (t3b)    
  
  (t3b) edge [] node [] {} (t1c)
  (t1c) edge [] node [] {} (t3c)  
  
  (u0a) edge [] node [] {} (u1a)
  (u1a) edge [] node [] {} (u3a)    

  (u3a) edge [] node [] {} (u1b)
  (u1b) edge [] node [] {} (u3b)    
  
  (u3b) edge [] node [] {} (u1c)
  (u1c) edge [] node [] {} (u3c)    

  (v0a) edge [] node [] {} (v1a)
  (v1a) edge [] node [] {} (v3a)    

  (v3a) edge [] node [] {} (v1b)
  (v1b) edge [] node [] {} (v3b)    
  
  (v3b) edge [] node [] {} (v1c)
  (v1c) edge [] node [] {} (v3c)    
  ;
\end{tikzpicture}

\caption{Basis paths for DAG in Figure~\ref{fig:exp}. For example, the path that always takes the bottom path through each diamond can be expressed as $x_1 + x_2 + x_3 - 2 * x_0$}
\label{fig:basis}
\end{figure}

We now give an overview of our algorithm. Recall, that the problem studied can be considered as follows: Given a DAG with source $s$ and sink $t$, find the longest source-to-sink path where the lengths of that paths are modeled as described in the Preliminaries, Section~\ref{section:model}.

The algorithm in~\cite{original_paper} expresses every path as a linear combination of basis paths; using their lengths to estimate the length of the paths not measured. Intuitively, if two paths overlap (they share common edges) then knowing the length of one provides some information about the length of the other. Even basis paths with zero coefficient in the linear combination can provide information about the length of an estimated path. 

In our algorithm, we write integer linear programs (ILPs), with one constraint per measurement, looking for the longest path with the edge weights consistent with the measurements and $\mu_x$. Even though, $\mu_x$ are not observable, we show how to obtain consistent bound on $\um$ from the measurements. 

\subsection{Path Extraction}

In this section we assume that we have a set of measurements~$\mm$, consisting of pairs $(x, l_x)$ where $x$ is a path and $l_x$ is the measured length of $x$ and we show how to find the longest path consistent with the measurements. In Section~\ref{section:basis_computation} we then show how to actually calculate the set~$S$. To make the notation consistent with~\cite{original_paper}, we call the measured paths the \emph{basis paths}, even if they do not necessarily form a basis in the underlying vector space as was the case in~\cite{original_paper}.

Suppose, for the moment, that the value of $\um$ is known and equal to $D \in \rr$. Then the following problem encodes the existence of individual edge weights ($w_e$) such that the cumulative sum ($\sum_{e \in x_i} w_e$) along each measured path is consistent with its measured length; that is, the measured value differs by at most $D$ from the cumulative sum.

\begin{problem}
Input: DAG $G$, a set of measurements $\mm$ and $D~\in~\rr$
$$
\begin{array}{rll}
\max & \len(path) & \\
s.t. & l_i - D \leq \sum_{e \in x_i} w_e \leq l_i + D &
\\ & \quad \textrm{for each measurement $(x_i, l_i) \in \mm$} & \\
vars: &  w_e  \geq 0 & \hspace{-10mm}\textrm{for each edge $e$} \\
\end{array}
$$
\end{problem}

Where $\max \len(path)$ expresses the length of the longest cumulative sum along some source-to-sink path in the graph. We now turn this problem into an ILP by expressing the existence of a path as follows:

\begin{problem}
Input: DAG $G$, a set of measurements $\mm$ and $D~\in~\rr$
$$
\begin{array}{rll}
\max & \sum_{e\in E} p_e & \\
s.t. & l_i - D \leq \sum_{e \in x_i} w_e \leq l_i + D & \\
 & \quad \textrm{for each measurement $(x_i, l_i) \in \mm$} & \\

& \sum_{e \in out(s)} b_e = 1 & \\

& \sum_{e \in in(t)} b_e = 1 & \\

& \sum_{e \in(v)} b_e = \sum_{e \in out(b)} b_e & \\
& \quad\textrm{for each vertex $v \not\in\{s,t\}$} & \\
& p_e \leq w_e & \\
& \quad\textrm{for each edge $e$} & \\
& p_e \leq M \cdot b_e & \\
& \quad\textrm{for each edge $e$} & \\
vars: &   \textrm{for each edge $e$:} & \\
& w_e  \geq 0 & \\
& b_e \in \{0,1\} & \\
& p_e \geq 0 & \\
\end{array}
$$
\label{problem:worst}
\end{problem}

Where $M \in \rr$ is a constant larger than any potential $w_e$. In the implementation, we take $M$ to be the largest $l_i$ in the set of measurements $S$ plus one.

In the above ILP (Problem~\ref{problem:worst}), Boolean variables $b_e$'s specify which edges of the graph are present in the extremal source-to-sink path and $p_e$'s shall equal $b_e \times w_e$. Thus, $\sum_{e\in E} p_e$ denotes the length of the extremal source-to-sink path.

The existence of a path is encoded by the constraints specifying that there is a \emph{flow} from the source to the sink. That is, that exactly one edge from the source has $b_e=1$, exactly one edge to the sink has $b_e=1$ and that for all intermediate vertices, the number of incoming edges to that vertex with $b_e=1$ equals the number of outgoing edges from that vertex with $b_e=1$.

Further, for each edge $e \in E$, we use the variable $p_e$ to denote the multiplication $p_e = b_e \cdot w_e$. As $b_e \in \{0,1\}$, we have $p_e \leq w_e$. Also, the constraints $p_e \leq M \cdot p_e$ ensure that if $b_e=0$ then $p_e = 0$. On the other hand, if $b_e=1$ then constraints imply only that $0 \leq p_e \leq w_e$. Finally, note that the objective function is to maximize $\sum_{e \in E} p_e$. hence, if $b_e=1$ then optimum value for $p_e$ is to set $p_e$ to $w_e$. Hence, in the optimal solution, if $b_e=1$ then $p_e = w_e = 1\cdot w_e = b_e \cdot w_e$ as desired.

Recall, that for measurement $(x_i, l_i)$ it holds that $l_i = \sum_{e \in x_i} w_e + d$ where $|d| \leq \mu_x \leq \um$. Thus, in general, $D$ needs to be at least $0$ to ensure that Problem~\ref{problem:worst} is feasible. By the assumption, taking $D = \um$ yields a feasible ILP.

\begin{lemma}
Problem~\ref{problem:worst} is feasible for $D = \um$.
\end{lemma}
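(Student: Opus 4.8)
The plan is to exhibit an explicit feasible point of Problem~\ref{problem:worst} built directly from the ground-truth data guaranteed by the timing model of Section~\ref{section:model}, and then to check the constraint blocks one at a time.

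First I would set $w_e$ to the true baseline cost of the basic block associated with $e$ (the quantity denoted $w_e$ in the model); these are nonnegative reals, so the domain constraint $w_e \ge 0$ holds. Since $G$ is the CFG of a terminating, loop-free program, its sink $t$ is reachable from its source $s$, so fix any source-to-sink path $P$ and put $b_e = 1$ if $e \in P$ and $b_e = 0$ otherwise, and $p_e = b_e \cdot w_e$. Then $b_e \in \{0,1\}$ and $p_e \ge 0$ are immediate, so the candidate assignment is integral and lies in the right domains.

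Next I would verify the three remaining constraint blocks. (i) Measurement constraints: for each $(x_i, l_i) \in \mm$ the model gives $l_i = \sum_{e \in x_i} w_e + d$ with $|d| \le \mu_{x_i} \le \um = D$, hence $\left| \sum_{e \in x_i} w_e - l_i \right| \le D$, which is exactly $l_i - D \le \sum_{e \in x_i} w_e \le l_i + D$. (ii) Flow constraints: because $G$ is a DAG, the walk $P$ meets each vertex at most once, so the unique edge of $P$ leaving $s$ gives $\sum_{e \in \outedges(s)} b_e = 1$, the unique edge of $P$ entering $t$ gives $\sum_{e \in \inedges(t)} b_e = 1$, and at every internal vertex $v$ both the in-sum and the out-sum of $b_e$ equal $1$ if $v \in P$ and $0$ otherwise, so flow conservation holds. (iii) Coupling constraints: $b_e \in \{0,1\}$ and $w_e \ge 0$ give $p_e = b_e w_e \le w_e$, and $p_e = b_e w_e \le M b_e$ because $w_e < M$ by the choice of $M$ as a value strictly larger than every potential baseline edge cost. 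This shows the assignment is feasible (with objective value $\sum_{e \in E} p_e = \sum_{e \in P} w_e \ge 0$).

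There is no deep obstacle here; the lemma is essentially a check that the encoding is not vacuous. The only two points that deserve a word of care are the existence of at least one source-to-sink path — which is exactly what the restriction to terminating loop-free programs buys us — and the fact that $M$ must genuinely upper-bound the true baseline weights rather than merely equal the heuristic $\max_i l_i + 1$ used in the implementation; if one wants to use that concrete value, one additionally observes that any edge $e$ lying on a measured path $x_i$ satisfies $w_e \le \sum_{e' \in x_i} w_{e'} \le l_i + \um$ while edges on no measured path may be assigned weight $0$ in the witness, so $M$ can always be chosen large enough without affecting the argument.
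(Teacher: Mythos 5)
Your proposal is correct and matches the paper's (implicit, one-line) justification: the true baseline weights $w_e$ satisfy every measurement constraint with $D=\um$ since $|d_{x_i}|\le\mu_{x_i}\le\um$, and any source-to-sink path supplies the flow variables. Your worry about $M$ is unnecessary, by the way, since the constraints $p_e\le w_e$, $p_e\le M\cdot b_e$, $p_e\ge 0$ do not force $p_e=b_e\cdot w_e$ for feasibility (e.g.\ $p_e=\min(w_e, M\cdot b_e)$, or even $p_e=0$, works), so the witness is feasible regardless of how $M$ compares to the true weights.
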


However, the value of $\um$ is neither directly observable nor known as it depends on the actual hardware the program is running on. We now show how to obtain a valid $D$ yielding a feasible ILP in Problem~\ref{problem:worst}. Later we show under what circumstances the solution of the resulting ILP gives the correct longest path.

Consider the following LP\footnote{Note that Problem~\ref{problem:delta} is a linear program and not an integer linear program.}.

\begin{problem}
Input: DAG $G$ and a set of measurements $\mm$
\label{problem:delta}
$$
\begin{array}{rll}
\min & \mu & \\
s.t. & l_i - \mu \leq \sum_{e \in x_i} w_e \leq l_i + \mu & \\
& \quad\textrm{for each measurement $(x_i, l_i) \in \mm$} & \\
vars: &   \textrm{for each edge $e$:} & \\
& w_e  \geq 0 & \\
& \mu  \geq 0\\ 
\end{array}
$$
\end{problem}

Intuitively, the problem above finds the least value of $D$ for which Problem~\ref{problem:worst} is feasible, i.e., the least $D$ consistent with the given set of measurements $M$. Formally, we have: 

\begin{theorem}
Let $p(\mu)$ be the solution of Problem~\ref{problem:delta}. Then taking $D = p(\mu)$ in Problem~\ref{problem:worst} yields a feasible ILP.
\end{theorem}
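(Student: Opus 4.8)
The plan is to show that the feasibility of Problem~\ref{problem:worst} is controlled entirely by the measurement box constraints on the edge weights, and then to read off a witness for those constraints directly from an optimal solution of Problem~\ref{problem:delta}. Concretely, I would split the variables of Problem~\ref{problem:worst} into the flow variables $b_e$, the product variables $p_e$, and the weights $w_e$, and argue that only the $w_e$ part is non-trivially constrained.

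First I would observe that the ``path'' part of Problem~\ref{problem:worst} is feasible independently of $D$. Pick any source-to-sink path $x$ in the DAG $G$ (one exists since $s$ and $t$ are the entry and exit of a well-formed CFG), set $b_e = 1$ for $e \in x$ and $b_e = 0$ otherwise; this satisfies $\sum_{e \in \outedges(s)} b_e = 1$, $\sum_{e \in \inedges(t)} b_e = 1$, and conservation at every internal vertex. Then set $p_e = 0$ for all $e$: the constraints $p_e \ge 0$, $p_e \le w_e$, and $p_e \le M\cdot b_e$ hold trivially for any nonnegative weights (the objective $\sum_e p_e$ is irrelevant to feasibility, so we need not make $p_e$ large). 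Hence a feasible point of Problem~\ref{problem:worst} exists as soon as we can choose nonnegative $w_e$ with $l_i - D \le \sum_{e \in x_i} w_e \le l_i + D$ for every measurement $(x_i,l_i)\in\mm$.

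Next I would supply those weights from Problem~\ref{problem:delta}. By the preceding lemma, Problem~\ref{problem:worst} is feasible for $D = \um$, so in particular there exist nonnegative $w_e$ satisfying the box constraints with $D=\um$; thus $(\mu=\um,(w_e))$ is feasible for Problem~\ref{problem:delta}, which is therefore a feasible linear program. Since its objective $\mu$ is bounded below by the constraint $\mu \ge 0$, the optimum is attained at some $(\mu^\star,(w_e^\star))$ with $\mu^\star = p(\mu) \in [0,\um]$. The component $(w_e^\star)$ then satisfies $l_i - p(\mu) \le \sum_{e \in x_i} w_e^\star \le l_i + p(\mu)$ and $w_e^\star \ge 0$ for every measurement --- exactly the weight constraints of Problem~\ref{problem:worst} with $D = p(\mu)$. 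Combining $(w_e^\star)$ with the flow assignment and $p_e=0$ from the previous step gives a feasible point of Problem~\ref{problem:worst} for $D = p(\mu)$, which proves the theorem.

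I do not expect a genuine obstacle here: the argument is a structural reduction plus the standard fact that a feasible, bounded-below LP attains its minimum. The only two points needing a line of care are (i) noting that the $p_e$ variables and the objective do not affect feasibility, so they can simply be zeroed, and (ii) noting $p(\mu)\ge 0$, which is what guarantees the interval $[l_i - p(\mu),\, l_i + p(\mu)]$ is nonempty. (If one prefers not to invoke the preceding lemma, feasibility of Problem~\ref{problem:delta} also follows directly by taking $w_e = 0$ and $\mu = \max_i l_i$.)
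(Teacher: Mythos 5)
Your proposal is correct and follows essentially the same route as the paper's own proof: observe that only the measurement box constraints on the $w_e$ can cause infeasibility of Problem~\ref{problem:worst}, and plug in the optimal weights from Problem~\ref{problem:delta} as a witness. Your version merely makes explicit what the paper leaves implicit (the concrete flow assignment with $p_e = 0$ and the attainment of the LP minimum), which is a reasonable bit of extra care but not a different argument.
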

\begin{proof}
First, note that Problem~\ref{problem:delta} always has a solution, e.g., take $w_e = 0$ and $\mu = \max_i l_i$.

Notice that, assuming there is at least one source-to-sink path, the only possible way for Problem~\ref{problem:worst} to be infeasible is that $D$ is small enough so that the constraints $l_i - D \leq \sum_{e \in x_i} w_e \leq l_i + D$ are inconsistent.

However, by the construction of Problem~\ref{problem:delta}, $p(\mu)$ satisfies, $l_i - p(\mu) \leq \sum_{e \in x_i} w_e \leq l_i + p(\mu)$ for every path $x_i$. The result now immediately follows.
\end{proof}

Note that, by construction, taking $\mu = \um$ in Problem~\ref{problem:delta} is feasible. Hence, $D \leq \um$ as $D$ is the least value consistent with the measurements. 

Notice that the solution of Problem~\ref{problem:delta}, can be used as a measure of timing repeatability of the underlying hardware platform. In case of perfect timing repeatability, that is if each edge (each statement of the underlying program) always took exactly the same time (regardless of concrete values of cache hits, misses, branch predictions, etc) to execute and that $d_x = 0$ for every measured path, then the solution of Problem~\ref{problem:delta} would be~$0$. Conversely, the larger the value of the solution of Problem~\ref{problem:delta} the bigger the discrepancy between different measurements. 

To measure the effect of timing repeatability on the computed value of~$D$, we have taken measurements for a set of benchmarks used to evaluate our tool on (Section~\ref{section:benchmarks}) and randomly perturbed the measured execution times. We have perturbed each measurement randomly by up to $10\%, 25\%$ and $50\%$. Table~\ref{table:deltas} shows the calculated values of~$D$. As expected, the larger the perturbation, the larger the calculated value of~$D$.

\begin{table}
\centering
\caption{Different values of $D$ obtained in the benchmarks by perturbing the measurements.}
\label{table:deltas}

\begin{tabular}{|l||r|r|r|r|}

\hline
\multirow{2}{*}{Benchmark} & \multicolumn{4}{c|}{Perturbation} \\ \cline{2-5}
 & 0\% & 10\% & 25\% & 50\% \\ \hline\hline
altitude & $57.0$ & $66.1$ & $87.8$ & $126.5$ \\ \hline
stabilisation & $343.2$ & $371.3$ & $807.1$ & $1107.2$ \\ \hline
automotive & $1116.0$ & $1281.3$ & $1486.9$ & $2961.3$ \\ \hline
cctask & $73.9$ & $110.6$ & $150.9$ & $270.6$ \\ \hline
irobot & $37.2$ & $95.9$ & $288.8$ & $552.8$ \\ \hline
sm & $0.1$ & $23.2$ & $117.4$ & $216.8$ \\ \hline
\end{tabular}
\end{table} 

\subsection{Optimality}

The solution of Problem~\ref{problem:worst} is the best estimate of the longest path that is consistent with measurements $\mm$. We now show how good the estimate is.

Consider the solution of Problem~\ref{problem:worst} with $D$ equal to the solution of Problem~\ref{problem:delta}. For each edge $e$, let $p(w_e)$ denote the value of the variable $w_e$ in the solution, and let $\tau$ be the path corresponding to the solution of Problem~\ref{problem:worst}. Denote the length of~$\tau$ in the solution by $p(\len(\tau))$. We now show how much $p(\len(\tau))$ differs from the actual length of $\tau$. Specifically, we shall show that the goodness of the estimate of the length of $\tau$ is related to the following ILP\footnote{To find the absolute value $|\len(path)|$ we solve two linear programs. One with the objective function $\max \len(path)$ and one with the objective function $\max -\len(path)$.}. 

\begin{problem}
\label{problem:bound}
Input: DAG $G$ and a set of measurements $\mm$
$$
\begin{array}{rll}
\max & |\len(path)| & \\
s.t. & -1 \leq \sum_{e \in  x_i} w_e \leq +1 & \\
& \quad\textrm{for each measurement $( x_i, l_i) \in \mm$} & \\ 
vars: & w_e & \hspace{-15mm}\textrm{for each edge $e$} \\
\end{array}
$$
\end{problem}

The existence of a path and the length of the path is expressed in the above ILP in exactly the same way as was done in Problem~\ref{problem:worst}. Note that the above ILP is always feasible with $|\len(path)$ at least $1$; one solution is to set $w_e = 1$ for one edge outgoing from the sink and set $w_e = 0$ for all other edges. Further, note that Problem~\ref{problem:bound} depends only on the graph and the set of the measured basis paths; it is independent of the (measured) lengths of the paths. In fact, we can show that as long as some path does not appear in the measurements $\mm$, the solution of the above ILP is strictly greater than $1$.

\begin{theorem}
\label{thm:morethanone}
Let $G$ be a DAG, $\mm$ a set of measurements and $\pi$ a source-to-sink path in $G$ such that $\pi$ is not present in $\mm$. Then the solution of Problem~\ref{problem:bound} is strictly greater than $1$.
\end{theorem}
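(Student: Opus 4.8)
The plan is to argue by contradiction: suppose the optimum of Problem~\ref{problem:bound} equals $1$ even though some source-to-sink path $\pi$ is missing from $\mm$. The key observation is that Problem~\ref{problem:bound} asks for edge weights $w_e$ (now allowed to be negative, since the $w_e$ in Problem~\ref{problem:bound} are unrestricted) such that $\sum_{e\in x_i} w_e = 0$ for every measured path $x_i$ — wait, more precisely $|\sum_{e\in x_i} w_e|\le 1$ — and we maximize $|\sum_{e\in\text{path}} w_e|$ over source-to-sink paths. I would first note that the feasible region is a polyhedral cone-like set that is symmetric under $w\mapsto -w$ and closed under positive scaling away from the $\pm1$ boundary, so the natural object to study is the linear subspace $U=\{w : \sum_{e\in x_i}w_e = 0 \text{ for all } x_i\in\mm\}$, i.e. the orthogonal complement of the span of the measured path vectors $\{p_{x_i}\}$. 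If $p_\pi$ does \emph{not} lie in $\mathrm{span}\{p_{x_i}\}$, then there exists $w\in U$ with $\sum_{e\in\pi}w_e = \langle w, p_\pi\rangle \neq 0$; scaling $w$ makes this value arbitrarily large while staying in the (unbounded) feasible region, so the optimum is $+\infty > 1$, certainly strictly greater than $1$.

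So the crux is: if $\pi\notin\mm$, is $p_\pi$ necessarily outside $\mathrm{span}\{p_{x_i} : x_i\in\mm\}$? This is \emph{not} true in general — a path not in the measured set can still be a linear combination of measured paths (indeed that is the whole point of the basis-path approach in~\cite{original_paper}). So the contradiction hypothesis must be used more carefully. I would instead handle the case $p_\pi\in\mathrm{span}\{p_{x_i}\}$ directly: write $p_\pi = \sum_i c_i p_{x_i}$. Here I would exploit that path vectors are $0/1$ vectors satisfying flow conservation, and that $\pi$ being distinct from every $x_i$ forces the combination to be nontrivial in a way that produces a feasible $w$ with $|\langle w,p_\pi\rangle|>1$. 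Concretely: pick an edge $e^\star$ that distinguishes $\pi$ from at least one measured path — say $e^\star\in\pi$ but $e^\star\notin x_j$ for some $j$, or vice versa — and try to build $w$ supported near $e^\star$. A clean route: consider the indicator $w = \mathbf{1}_{e^\star}$ (weight $1$ on $e^\star$, $0$ elsewhere). Then $\sum_{e\in x_i}w_e\in\{0,1\}\subseteq[-1,1]$ for every path, so $w$ is feasible, and $\sum_{e\in\pi}w_e = 1$ if $e^\star\in\pi$. This only gives $\ge 1$, not $>1$, so I would need to \emph{combine} two such distinguishing edges, or scale: if $\pi$ uses $k\ge 1$ edges that some particular measured path avoids, can I find $w$ with those edges weighted so that the measured-path sums stay in $[-1,1]$ but $\pi$'s sum exceeds $1$?

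The sharpest argument, which I expect to be the intended one, is this. Since $\pi\notin\mm$ and all path vectors are distinct $0/1$ vectors, $p_\pi$ is an extreme point of the integral polytope of source-to-sink flows that is \emph{not} among the $p_{x_i}$; by LP duality / separation, there is a linear functional $w$ (a fractional edge weighting, possibly negative) that is strictly larger on $p_\pi$ than on \emph{all} measured $p_{x_i}$ simultaneously. Normalizing so that $\max_i \langle w, p_{x_i}\rangle$ and $\min_i\langle w,p_{x_i}\rangle$ both lie in $[-1,1]$ (rescale and shift using a fixed offset of the form $c\cdot\mathbf{1}_{\text{edge out of }s}$, which all source-to-sink paths traverse exactly once so it shifts every path sum by the same $c$), we get a feasible point of Problem~\ref{problem:bound} with $\langle w, p_\pi\rangle$ strictly outside the common interval, hence of absolute value strictly greater than $1$. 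The main obstacle, and the step I would spend the most care on, is the normalization/shifting: ensuring that the affine adjustment preserves feasibility (it does, because every $s$–$t$ path has exactly one edge leaving $s$, so adding $c\,\mathbf{1}_{e_0}$ for a fixed $e_0\in\mathrm{out}(s)$ shifts \emph{all} path-sums, measured and $\pi$'s, by exactly $c$) while genuinely separating $\pi$ from the measured paths strictly. I would also need to confirm that "strictly greater than $1$" — and not merely $\ge 1$ — is achievable, which follows because strict separation of a vertex from a finite set of other vertices of a polytope is always possible, so the separating functional's value on $p_\pi$ can be pushed strictly past the boundary after rescaling.
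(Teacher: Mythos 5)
Your final (``sharpest'') argument is correct in substance but takes a genuinely different route from the paper. The paper proves the theorem constructively: it exhibits explicit weights, namely $1+\frac{1}{|E|}$ on the first edge of $\pi$ (the edge of $\pi$ leaving the source), $-\frac{1}{|E|}$ on every edge that branches off $\pi$ (edges whose tail lies on $\pi$ but which are not themselves on $\pi$), and $0$ elsewhere. Then $\len(\pi)=1+\frac{1}{|E|}>1$, while any measured path either avoids the first edge of $\pi$ (length $\le 0$) or shares it and must eventually leave $\pi$ through a branching edge (length $\le 1$), and every path has length at least $|E|\cdot(-\frac{1}{|E|})=-1$; so the assignment is feasible and certifies optimum at least $1+\frac{1}{|E|}$. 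You instead note that $p_\pi$ is a $0/1$ vector distinct from the finitely many measured $0/1$ vectors, hence an extreme point of the unit cube (or of the path polytope) not contained in their convex hull, so a strictly separating functional exists, and an affine renormalization of all path sums turns it into a feasible point of Problem~\ref{problem:bound} on which $\pi$'s length exceeds $1$. Your route is cleaner conceptually and non-constructive; the paper's buys an explicit witness and the quantitative bound $1+\frac{1}{|E|}$. (Your first two attempts, the span argument and the single-indicator-edge argument, are rightly discarded by you and are not needed.)

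One step of your normalization is wrong as written and needs repair: adding $c\cdot\mathbf{1}_{e_0}$ for a single fixed $e_0\in\outedges(s)$ does \emph{not} shift every path sum by $c$ --- only the paths that actually use $e_0$ are shifted, and in general $s$ has several outgoing edges, so the measured paths and $\pi$ need not all pass through any one fixed edge. The correct implementation is to add $c$ to the weight of \emph{every} edge in $\outedges(s)$ (or every edge in $\inedges(t)$): each source-to-sink path uses exactly one such edge, so all path sums shift uniformly by $c$. The shift cannot simply be dropped, as you suspected: if the measured sums $\langle w,p_{x_i}\rangle$ occupy, say, $[-10,\,0.5]$ while $\langle w,p_\pi\rangle=1$, no positive scaling alone keeps the measured sums in $[-1,1]$ and pushes $\pi$'s sum above $1$. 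With the corrected shift the normalization goes through: scale so that the spread of the measured sums is at most $2$, shift so that their maximum equals $1$; strict separation then places the transformed value of $\pi$ strictly above $1$, and feasibility of the measured constraints is preserved, which is all that is needed.
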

\begin{proof}
We give a satisfying assignment to variables $w_e$ in Problem~\ref{problem:bound} such that $\len(\pi) > 1$.

Specifically, let $e_i$ be the first edge of $\pi$, that is, the edge outgoing from the sink of $G$. Further, let, $D = \{(u, v) \;|\: u \in \pi\}$ be the set of edges with the initial vertex lying on $\pi$. Then the assignment to weights $w_e$ is as follows:

$$
w_e = \left\{\begin{array}{rr}
			1 + \frac{1}{|E|} & e = e_i \\
			-\frac{1}{|E|} & e \in D \\
			0 & \textrm{otherwise} \\
		\end{array}
\right.
$$  

Note that, with this assignment to $w_e$'s, the length of $pi$ equals $\len(\pi) = 1 + \frac{1}{|E|} > 1$. Now, consider any other path $\tau$ used in measurements $\mm$. In particular, $\tau \neq \pi$. There are $|E|$ edges in $G$ and the weight $w_e$ associated with each edge $e$ is at least $-\frac{1}{|E|}$. Hence, $\len(\tau) \geq |E| \times -\frac{1}{|E|} = -1$. 

Now, if $\tau$ does not include $e_i$, that is $e_i \not\in\tau$ then, $\len(\tau) \leq 0$ as $w_{e_i}$ is the only positive weight. If $\tau$ includes $e_i$, that is $e_i \in \tau$, then $\tau$ necessarilly contains at least one edge from $D$ as $\tau$ is different from $\pi$. Hence, $\len(\tau) \leq 1$. In any case, $-1 \leq \len(\tau) \leq 1$ as required and thus we have given a valid assignment to $w_e$'s with $\len(\pi) > 1$.
\end{proof} 

Recall that the set $S$ denotes the set of paths occurring measurements $\mm$. Let $r(w_e)$ and $r(\mu_ x)$ be the real values of $w_e$ for each edge $e$ and $\mu_x$ for each path $x \in S$. Then for each edge $e$ the expression $|p(w_e) - r(w_e)|$ denotes the difference between the calculated values of $w_e$ and the actual value of $w_e$. Analogously, the expression extends to entire paths: for a path $x$ we have $p(w_x) = \sum_{e\in x}p(w_e)$. Now, the difference for the worst path can be bounded as follows.

\begin{theorem}
\label{thm:bound}
Let $k$ be the solution of Problem~\ref{problem:bound}. Then $$\left|p(len(\tau)) - r(len(\tau))\right| \leq 2k\um$$
\end{theorem}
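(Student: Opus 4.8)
\textit{Proof proposal.} The plan is to track, for each edge $e$, the discrepancy
$\delta_e := p(w_e) - r(w_e)$ between the weight returned by Problem~\ref{problem:worst}
(run with $D$ equal to the optimum of Problem~\ref{problem:delta}) and the true baseline weight,
and to show that the vector $(\delta_e)_{e\in E}$, after dividing by $2\um$, is a feasible point
of Problem~\ref{problem:bound}. Since
$p(\len(\tau)) - r(\len(\tau)) = \sum_{e\in\tau} p(w_e) - \sum_{e\in\tau} r(w_e) = \sum_{e\in\tau}\delta_e$,
and $\tau$ is a genuine source-to-sink path of $G$, this quantity (rescaled) is exactly one of the
values over which Problem~\ref{problem:bound} maximizes, hence at most $k$ in absolute value; undoing
the rescaling gives the claimed bound $2k\um$.

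The core step is the estimate $\bigl|\sum_{e\in x_i}\delta_e\bigr|\le 2\um$ for every measured path
$x_i$. First I would invoke the constraints of Problem~\ref{problem:worst} that the returned solution
satisfies: with $D$ the optimum of Problem~\ref{problem:delta}, we have
$l_i - D \le \sum_{e\in x_i} p(w_e) \le l_i + D$, i.e. $\bigl|\sum_{e\in x_i} p(w_e) - l_i\bigr|\le D$.
Second, the bounded path-dependent repeatability condition gives
$l_i = \sum_{e\in x_i} r(w_e) + d_i$ with $|d_i|\le \mu_{x_i}\le\um$, hence
$\bigl|l_i - \sum_{e\in x_i} r(w_e)\bigr|\le\um$. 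Combining these two inequalities with the triangle
inequality, together with the already-established fact that $D\le\um$, yields
$\bigl|\sum_{e\in x_i}\delta_e\bigr|\le D+\um\le 2\um$ for each measurement $(x_i,l_i)\in\mm$.

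Assuming $\um>0$, put $w'_e := \delta_e/(2\um)$. The estimate above says
$\sum_{e\in x_i} w'_e\in[-1,1]$ for every $(x_i,l_i)\in\mm$, so $(w'_e)_e$ satisfies all the
constraints of Problem~\ref{problem:bound}; evaluating its objective along the particular source-to-sink
path $\tau$ gives $\bigl|\sum_{e\in\tau} w'_e\bigr| = \bigl|\sum_{e\in\tau}\delta_e\bigr|/(2\um)\le k$,
and multiplying through by $2\um$ finishes the case. The degenerate case $\um=0$ is handled separately:
then $D=0$ and each $d_i=0$, so $\sum_{e\in x_i}\delta_e=0$ for all $i$, hence $t\cdot(\delta_e)_e$ is
feasible in Problem~\ref{problem:bound} for every $t>0$, forcing
$\bigl|\sum_{e\in\tau}\delta_e\bigr|\le k/t$ for all $t>0$ and therefore $\sum_{e\in\tau}\delta_e=0=2k\um$.
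There is no serious obstacle here; the only points requiring care are that $p(w_e)$ and $r(w_e)$ are being
compared on the same baseline quantities (all the variation terms $d_i$ are absorbed into the $\pm D$ and
$\pm\um$ slacks), that using the optimal $D$ from Problem~\ref{problem:delta} is legitimate since the
statement fixes $D$ to that value, and the mild case split on whether $\um=0$ in the rescaling. I expect
the final write-up to run only a few lines.
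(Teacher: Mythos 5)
Your proposal is correct and follows essentially the same route as the paper: bound the per-measurement discrepancy $\bigl|\sum_{e\in x_i}(p(w_e)-r(w_e))\bigr|\le 2\um$ using the constraints of Problem~\ref{problem:worst}, the repeatability condition, and $D\le\um$, then observe that the rescaled discrepancy vector is feasible for Problem~\ref{problem:bound} and invoke the optimality of $k$ along the path $\tau$. Your explicit treatment of the degenerate case $\um=0$ is a minor refinement the paper's proof silently skips over when dividing by $2\um$, but it does not change the argument.
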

\begin{proof}
Note that by construction, $r(w_e)$ and $r(\mu_x)$ are a solution of Problem~\ref{problem:delta}. Hence, for every $(x_i, l_i) \in \mm$ it holds that. 

$$l_i - \um \leq l_i - r(\mu_{x_i}) \leq \sum_{e\in x_i} r(w_e) \leq l_i + r(\mu_{x_i}) \leq l_i + \um.$$

Recall that $D \leq \um$ and that that $p(w_e)$ is a solution of Problem~\ref{problem:worst}. Hence, for every $(x_i, l_i) \in \mm$ it holds that:

$$l_i - \um \leq l_i - D \leq \sum_{e\in x_i} p(w_e) \leq l_i + D \leq l_i + \um.$$

Hence, by subtracting the last two equations from each other, we have for any basis path $x_i \in S$ that:

$$- 2 \um \leq \sum_{e\in x_i} p(w_e) - r(w_e) \leq 2 \um$$

Now, dividing by $2 \um$ we have:

$$-1 \leq \frac{\sum_{e\in x_i} p(w_e) - r(w_e)}{2\um} \leq 1.$$
for any basis path $x_i \in S$. 

Thus, the above inequality implies that taking $$w_e = \frac{p_2(w_e) - r(w_e)}{2\um}$$ is a (not necessarily optimal) solution of Problem~\ref{problem:bound}. Since $k$ is the length of the longest path achievable in Problem~\ref{problem:bound}, it follows that for any path $x$ (not necessarily in the basis), we have  $$-k \leq \frac{\sum_{e\in x} p(w_e) - r(w_e)}{2\um} \leq k.$$

By rearranging, we have

$$-2k \um \leq \sum_{e\in x} p(w_e) - r(w_e) \leq 2k \um.$$

In other words, the calculated length differs from the real length by at most $2k\um$, as desired.
\end{proof}

\begin{table}
\caption{Comparison of the accuracy in the longest path extraction between our algorithm and the one in~\cite{original_paper}. For our accuracy, we take $2 * k$ where $k$ is the solution of Problem~\ref{problem:bound}. For~\cite{original_paper} we take $2 * (\textrm{\# basis paths})$.}
\label{table:basis_estimate}
\centering
\begin{tabular}{|l||r|r|r|}
\hline
Benchmark & \# Basis Paths & \cite{original_paper} Accuracy & Our Accuracy \\ 
\hline\hline
altitude & $6$ & $12$ & $10.0$ \\
\hline
stabilisation & $10$ & $20$ & $16.4$ \\
\hline
automotive & $13$ & $26$ & $14.0$ \\ 
\hline
cctask & $18$ & $36$ & $34.0$ \\
\hline
irobot & $21$ & $42$ & $18.0$ \\
\hline
sm & $69$ & $138$ &  $48.6$\\ 
\hline
\end{tabular}
\end{table}

Recall that the algorithm in~\cite{original_paper} has difference between the estimated and the actual length at most $2|E|\um$ whereas our algorithm has $2k\um$ where $k$ is the solution of Problem~\ref{problem:bound}. Observe that the dependence in the error term on $\um$ is unavoidable as $\um$ is inherent to the timing properties of the underlying platform.

For comparison, we have generated the same basis as in~\cite{original_paper} and calculated the corresponding $k$'s for several benchmarks. Table~\ref{table:basis_estimate} summarizes the results (see Table~\ref{table:properties} for the description of benchmarks). As can be seen from the table, when using the same set of measurements, our method gives more accurate estimates than the one in~\cite{original_paper}.

Furthermore, recall that in Problem~\ref{problem:delta} we calculate the best (lower) bound $D$ on $\um$ consistent with the given measurements. Together with the above theorem, this gives ``error bounds'' to the estimate in Problem~\ref{problem:worst}. Specifically, if the length of the longest path computed in Problem~\ref{problem:worst} is $T$ then, the length of the path when measured, that is consistent with the measurements is within $T \pm (2k \times D)$. However, note that this is only the best bound deducible from the measurements since $D \leq \um$. Since $\um$ is not directly observable and we assume no nontrivial bound on $\um$, the length of the path cannot be bounded more accurately without actually measuring the path.

\medskip
The above analysis applies to the extraction of the single longest path. Now, suppose that instead of extracting just one longest path, we want to extract $K$ longest paths. To that end, we iterate the above procedure and whenever a path is extracted, we add a constraint eliminating the path from the solution of Problem~\ref{problem:worst} and then solve the updated ILP. For a path $x$, the constraint eliminating it from the solution space of Problem~\ref{problem:worst} is $\sum_{e\in x} b_e < |x|$. The constraint specifies that not all the edges along $x$ can be taken together in the solution.

As the length of the predicted and the measured length differ, it may happen (e.g., Table~\ref{table:results}) that when measured the length of the path predicted to be the longest is not actually the longest. Thus, to find the longest path, we may need to iterate the above process by generating paths with ever smaller predicted lengths, stopping whenever the current estimate differs by more than $(2k \times D)$ from the longest estimate. 

\subsection{Basis Computation}
\label{section:basis_computation}

The algorithm (Problem~\ref{problem:worst}) to estimate the longest path depends on the set of measurements~$M$ of the basis paths~$S$. In this section we show how to calculate such a set of paths. In general, arbitrary set of paths can be used as basis paths. For example, we have shown in Table~\ref{table:iterative} that using the set of paths used in~\cite{original_paper}, we are able to get more accurate estimates than those obtained in~\cite{original_paper}. 

Recall that (Theorem~\ref{thm:bound}) the accuracy of the solution of Problem~\ref{problem:worst} is tightly coupled with the solution of Problem~\ref{problem:bound}. This leads to a tunable algorithm, which depending on the desired accuracy of the predictions, calculates a set of paths to be used in Problem~\ref{problem:worst}.

Specifically, given a desired accuracy $A \in \rr, A \geq 1$, we want to find a set of feasible paths $S$ such that the solution of Problem~\ref{problem:bound} is at most $A$. We implemented a simple iterative algorithm (Algorithm~\ref{algo:iterative}) that finds such a set by repeatedly extending the set of paths by the path corresponding to the solution of Problem~\ref{problem:bound}.

In the algorithm, if the longest extracted path is infeasible in the underlying program, we add a constraint into the ILP (Problem~\ref{problem:bound}) prohibiting the path. That is, if the longest path is infeasible and $\tau$ is the unsatisfiable core of the longest path\footnote{Minimal set of edges that cannot be taken together as identified by an SMT solver} then we add a constraint that not all the $b_e$'s corresponding to the edges used in $\tau$ are set to~$1$ in Problem~\ref{problem:worst}, i.e., $\sum_{e \in \tau} b_e < |\tau|$ where $|\tau|$ denotes the number of edges in~$\tau$. Then we solve the updated ILP.

\begin{algorithm}
\caption{Iterative algorithm for basis computation}
\label{algo:iterative}
\begin{algorithmic}
\State $S\gets \emptyset$\\
\While{(Solution of Problem~\ref{problem:bound} with paths $S$) $>A$}{
	\State{$x \gets \textrm{longest path in Problem~\ref{problem:bound}}$}\\
	\eIf{$x \textrm{ is feasible}$}{
		\State $S \gets S \cup \{x\}$
}{
		\State{$\textrm{Add a constraint prohibiting $x$}$}
		}
}

\State\Return $S$
\end{algorithmic}
\end{algorithm}

\begin{theorem}
If $A \geq 1$ then the Algorithm~\ref{algo:iterative} terminates with a set $P$ of paths such that the solution of Problem~\ref{problem:bound} with paths $P$ is at most $A$.
\end{theorem}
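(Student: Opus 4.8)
The plan is to split the statement into \emph{partial correctness} — if the algorithm halts, the returned set $P$ satisfies the bound — and \emph{termination} — the algorithm does halt. Partial correctness is essentially free: the \texttt{while} loop can only exit when its guard ``(solution of Problem~\ref{problem:bound} with paths $S$) $>A$'' becomes false, so at exit the solution of Problem~\ref{problem:bound} instantiated with the current path set is at most $A$, and that set is exactly what is returned as $P$. One also notes in passing that every path inserted into $S$ has passed the feasibility test, so $P$ is a set of feasible paths, as intended.

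The substance is termination, and here I would exhibit a finite progress measure. Since $G$ is a finite DAG it has only finitely many source-to-sink paths; call this number $N$. Every pass through the loop body falls into one of two cases. In the \emph{feasible} case, the extracted longest path $x$ of Problem~\ref{problem:bound} is feasible in the program and is added to $S$; the key point — and this is where $A \ge 1$ enters — is that $x \notin S$ already, for if $x$ were in $S$ the measurement constraint $-1 \le \sum_{e\in x} w_e \le 1$ would force $|\len(x)| \le 1$, whereas $x$ attains the optimum of Problem~\ref{problem:bound}, which by the (still-true) loop guard strictly exceeds $A \ge 1$. Hence $|S|$ strictly increases, so the feasible case occurs at most $N$ times. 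In the \emph{infeasible} case a constraint $\sum_{e\in\tau} b_e < |\tau|$ is added for an unsatisfiable core $\tau \subseteq x$; afterwards the $b$-assignment corresponding to $x$ (and to any source-to-sink path containing all of $\tau$) violates this new constraint, so the number of source-to-sink paths realizable by the flow variables of the ILP strictly decreases, and as it starts at most $N$ it drops at most $N$ times. Therefore the loop runs at most $2N$ iterations and halts; combined with partial correctness this gives the theorem.

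A few points will need care. First, the two progress measures must not interfere: a feasible program path never contains an unsatisfiable core, so it is never prohibited, and an infeasible path is never placed into $S$; thus ``$|S|$ grows'' and ``ILP-realizable path count shrinks'' are genuinely independent counters. Second, I would make explicit whether the loop guard (and the path extraction) evaluates Problem~\ref{problem:bound} with or without the accumulated prohibition constraints — either reading is fine for partial correctness, and the argument above already treats extraction as respecting the prohibitions — and I would record the standing assumption that $G$ has at least one feasible source-to-sink path, so that the ILP remains feasible throughout. The hard part is not any single inequality but pinning down cleanly that each extracted longest path is ``new'' in the appropriate sense, which is precisely what the hypothesis $A \ge 1$ buys us; the rest is finite descent.
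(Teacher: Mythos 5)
Your proposal is correct and follows essentially the same route as the paper's proof: termination because the loop guard together with $A\ge 1$ forces each extracted longest path to be new (a path already in $S$ has its length constrained to at most $1$), infeasible paths are permanently excluded by the added constraints, and the finite number of source-to-sink paths then bounds the iterations, with the returned set satisfying the bound simply because the guard fails at exit. You merely make the paper's informal counting explicit (the $2N$ bound and the non-interference of the two progress measures), which is a welcome sharpening but not a different argument.
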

\begin{proof}
Note that each constraint in Problem~\ref{problem:bound} limits the length of some path to (at most)~$1$.
In particular, if~$S$ contains all the paths in the graph then the solution of Problem~\ref{problem:bound} equals~$1$.

Further, if the algorithm finds some path~$x$ to be the longest in some iteration then the length of~$x$ in all subsequent iterations will be at most~$1$ as $x \in S$. Therefore, as long as the solution is greater than~$1$, the longest path found is different from all the paths found in the previous iterations. 

Also, if the path is infeasible, then a constraint is added that prevents the path occurring in any subsequent iterations. It follows from these considerations that the algorithm keeps adding new paths in each iteration and eventually terminates.

By construction, the solution of Problem~\ref{problem:bound} with the set of paths $S$ is at most $A$.
\end{proof}

In the extreme case of $A=1$, it immediately follows from Theorem~\ref{thm:morethanone}, that Algorithm~\ref{algo:iterative} returns all feasible paths in the underlying graph.

We have implemented the above iterative algorithm and evaluated it on several case studies. Table~\ref{table:iterative} summarizes the number of paths generated by the algorithm for the given accuracy $k$ as well as the running time required to find those paths.

\begin{table}
\caption{Number of paths generated by Algorithm~\ref{algo:iterative} to reach the desired accuracy (second column). Third column shows the accuracy (solution of Problem~\ref{problem:bound}) of the generated set of paths}
\label{table:iterative}
\centering
\begin{tabular}{|l||r|r|r|r|}
\hline
Benchmark & Desired $k$ & Actual $k$ & \# Basis Paths & Time(s) \\
\hline\hline
\multirow{3}{*}{altitude} & 10 &  5.0 & 7 & 0.03 \\
& 5 &  5.0 & 7 & 0.03 \\
& 2 &  1.0 & 10 & 1.66 \\
\hline
\multirow{3}{*}{stabilisation} & 10 &  7.0 & 11 & 0.10 \\
& 5 &  4.7 & 12 & 0.96 \\
& 2 &  2.0 & 40 & 22.78 \\
\hline
\multirow{3}{*}{automotive} & 10 &  7.0 & 14 & 0.14 \\
& 5 &  5.0 & 14 & 0.89 \\
& 2 &  2.0 & 30 & 27.60 \\ 
\hline
\multirow{3}{*}{cctask} & 10 &  9.0 & 19 & 0.20 \\
& 5 &  5.0 & 25 & 4.10 \\
& 2 &  2.0 & 76 & 42.91 \\
\hline
\multirow{3}{*}{irobot} & 10 &  9.0 & 22 & 0.50 \\
& 5 &  5.0 & 34 & 20.13 \\
& 2 &  2.0 & 118 & 182.42 \\
\hline
\multirow{3}{*}{sm} & 22 &  21.8 & 70 & 328.14 \\
& 18 & 18.0  & 73 & 7089.04 \\
& 15 & 14.5  & 77 & 10311.49 \\
\hline
\end{tabular}


\end{table}

We have observed that the basis computation took substantial part of the entire algorithm. However, notice that the basis-computation algorithm (Algorithm~\ref{algo:iterative}) need not start with $S = \emptyset$ and works correctly for any initial collection of paths~$S$. Therefore, as an optimization, we first compute the initial set of paths $S$ using the original algorithm from~\cite{original_paper}, which computes the 2-barycentric spanner of the underlying DAG. Only then we proceed with the iterative algorithm to find a set of paths with the desired accuracy. 

Figure~\ref{fig:kvstime} shows the performance of the iterative algorithm on two benchmarks. The decreasing (blue) line shows how the accuracy $k$ decreases with each path added to the basis. The increasing (red) line shows the time needed to find each path. The figure shows only the performance after the precomputation of the 2-barycentric spanner.

\begin{figure}
\caption{Performance of the Algorithm~\ref{algo:iterative}. The decreasing (blue) line shows the length of $x$ computed ($k$) in line 3. The increasing (red) line shows the time taken to perform each iteration.}
\label{fig:kvstime}
\centering
\subfloat[][cctask]{
\centering
\includegraphics[scale=0.45]{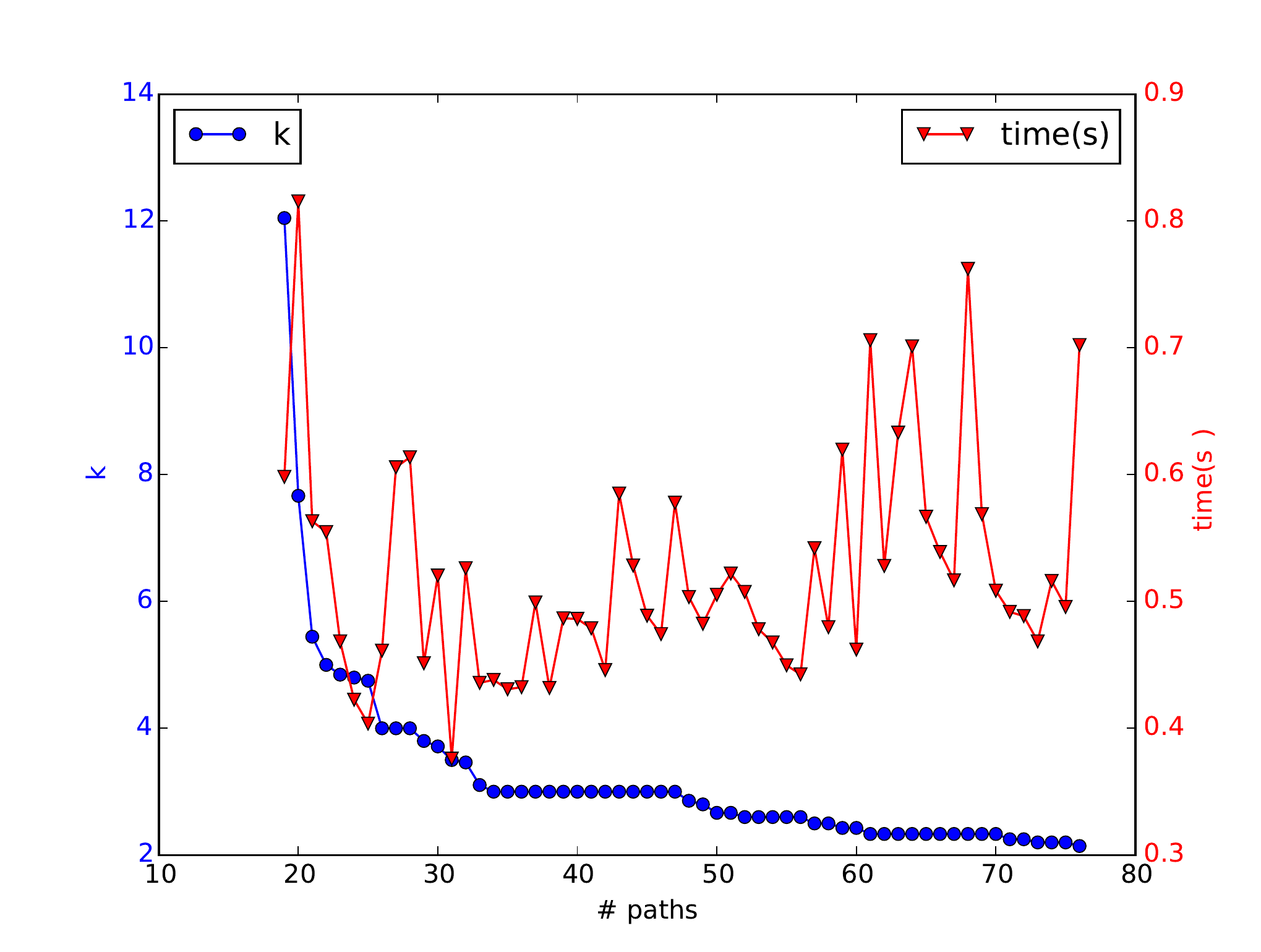}
}

\subfloat[][irobot]{
\centering
\includegraphics[scale=0.45]{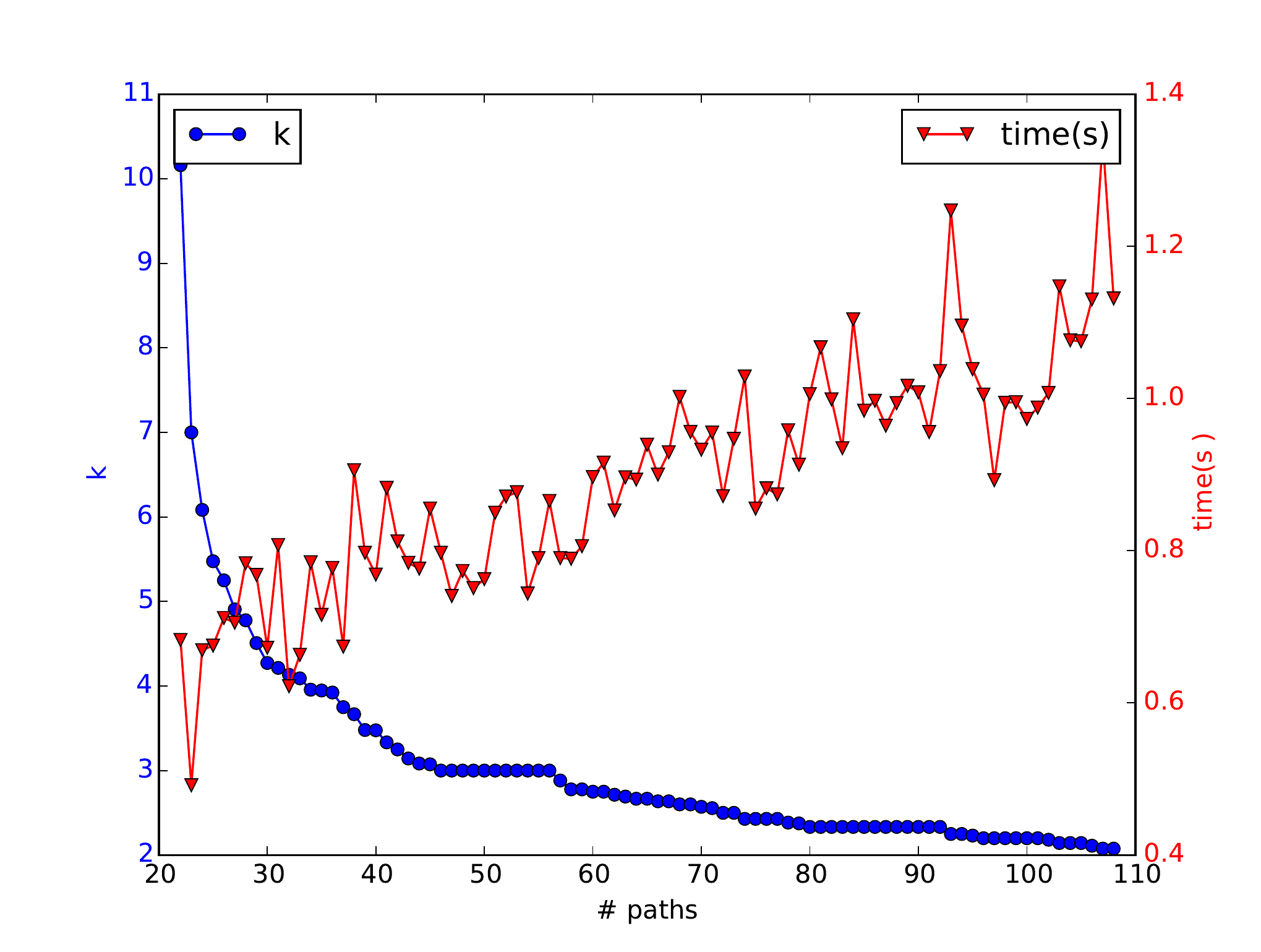}
}

\end{figure}

\section{Evaluation}
\subsection{Implementation}

The algorithm to identify the longest (up to accuracy $k$) path a given program $P$ is shown in Algorithm~\ref{algo:final}.

\begin{algorithm}
\begin{algorithmic}
\State{\textrm{Extract CFG $G$ from the program $P$}}
\State{$S \gets \textrm{basis with accuracy at most $k$ (Algorithm~\ref{algo:iterative})}$}
\State{$D \gets \textrm{Solution of Problem~\ref{problem:delta}}$}
\State{$\tau \gets \textrm{Solution of Problem~\ref{problem:worst} with paths $S$ and $D$}$}
\State\Return{$\textrm{$\tau$ and its estimated length}$} 
\end{algorithmic}
\caption{Algorithm to find the worst-case execution time. Input: Program $P$, accuracy $k$}
\label{algo:final}
\end{algorithm}

We now briefly describe the implementation of main stages of the Algorithm~\ref{algo:final}. Our implementation is build on top of~\cite{original_paper}. See~\cite{original_paper} for further discussion of technical details.

\textbf{CFG extraction} 
The procedure begins by building the CFG $G$ of the given program. The vertices correspond to locations and edges to individual statements. The extraction is build on top of CIL front end for C~\cite{cil}. Note that all ILPs (Problems~\ref{problem:worst},~\ref{problem:delta} and~\ref{problem:bound}) introduce a variable per each edge of $G$ and that each problem optimizes for the length of a source-to-sink path. Thus, if some edge is always followed by another one (the in- and  out-degree of the joining vertex is one) then the edges can be merged into a single one\footnote{For example, in Figure~\ref{fig:exp}, every diamond can be replaced by two edges, one edge for the top half and one edge for the bottom half.} without changing the solution of the ILP problems. Therefore, we process $G$ by merging edges that lie on a single path into a single edge. This reduces the number of variables used and improves the performance.

\textbf{Basis computation} The basis is computed as described in Section~\ref{section:basis_computation}, Algorithm~\ref{algo:iterative}. We use Gurobi solver~\cite{gurobi} to solve individual ILPs and use the 2-barycentric spanner as the initial set $S$. Solving the ILPs posed the main bottleneck of our approach.

\textbf{Input generation} Each path through the CFG~$G$ corresponds to a sequence of operations in the program and hence a conjunction of statements. For a given path, we use the Z3 SMT solver~\cite{z3} to find an input that corresponds to the given path or to prove that the path is infeasible and no corresponding input exists. In the experiments, SMT solving was fairly fast.

\textbf{Longest-path extraction} We solve Problem~\ref{problem:worst} using Gurobi ILP solver~\cite{gurobi}. If the extracted path $\pi$ is infeasible, we add a constraint ($\sum_{e\in\pi} b_e < |\pi|$) eliminating the path from the solution and solve the new problem. Similarly, we solve for $K$ longest paths; we add a constraint prohibiting the extracted path and solve the resulting ILP again, repeating until we successfully generate $K$ feasible paths. 

Recall that the calculated length of the path is accurate only up to the precision $2k \times D$. Thus, we can repeat the process until the length of the extracted path is outside of the range for the longest extracted path. However, in practice we did not find this necessary and found the longest path after a few iterations.

\textbf{Path-Length Measurement} To measure the execution time of the given program on a given input, we create a C program where we set the input to the given values and then run it using a cycle-accurate simulator of the PTARM processor~\cite{ptarm}.

\subsection{Benchmarks}
\label{section:benchmarks}

We have evaluated our algorithm on several benchmarks and compared it with the algorithm in~\cite{original_paper}. We used the same benchmarks as in~\cite{original_paper} as well as benchmarks from control tasks from robotic and automotive settings. The benchmarks in~\cite{original_paper} come from M\"alardalen benchmark suite~\cite{malardalen} and the PapaBench suite~\cite{papabench}.  The authors of~\cite{original_paper} chose implementations of actual realtime systems (as opposed to hand-crafted ones) that have several paths, were of various sizes, but do not
require automatic estimation of loop bounds. 

\begin{table}
\caption{Number of nodes, edges and paths in the CFGs extracted from the benchmarks}
\label{table:properties}
\centering
\begin{tabular}[c]{|l||r|r|r|}
\hline
Benchmark & \# Nodes & \# Edges & \# Paths \\
\hline\hline
altitude & $36$ & $40$ & $11$ \\
\hline
stabilisation & $64$ & $72$ & $216$ \\
\hline
automotive & $88$ & $100$ & $506$ \\
\hline
cctask & $102$ & $118$ & $657$ \\
\hline
irobot & $170$ & $195$ & $8136$ \\
\hline
sm & $452$ & $523$ & $33,755,520$ \\
\hline
\end{tabular}
\end{table}

Since we assume all programs contain only bounded loops and no recursion, we preprocessed the programs by unrolling the loops and inlining the functions where necessary. Table~\ref{table:properties} summarizes properties of the benchmarks used.

Table~\ref{table:results} shows the lengths of five longest paths as found by our algorithm and the one in~\cite{original_paper}. The first half of the table shows the results as obtained by the algorithm in~\cite{original_paper}. The second half shows the results as obtained (together with the ``error bounds'') by our algorithm (Algorithm~\ref{algo:final}). 

Note that in each case the longest path returned by our algorithm is never shorter than the longest path found by~\cite{original_paper}. In half of the cases, our algorithm is able to find a path longer than the one in~\cite{original_paper}. Also notice that the actual measured length is always within the computed ``error bounds''. 

The biggest benchmark, \emph{sm}, is a collection of nested switch-case logic setting state variables but with minimal computations otherwise. Hence, there is a large number of paths ($33,755,520$) in the CFG yet, as expected, the computed $D$ is small.

\begin{table*}[t]
\begin{minipage}{\textwidth}
\centering
\caption{Comparison of the results produced by the algorithm presented in this paper and in~\cite{original_paper} for generating the top five longest paths for a set of benchmarks. Column \emph{Predicted} shows the length predicted by each algorithm. Column \emph{Measured} show the running time measured when run on the corresponding input. For our paper, we give ``error bounds'' of the form $k \times D$. The column \emph{Time} shows the time it takes to generate the estimates and test cases. The largest measured value per each benchmark is shown in bold.}
\label{table:results}

\begin{tabular}{|l||rr|r||rr|r|}
\hline
\multirow{2}{*}{Benchmark} & \multicolumn{3}{c||}{GameTime \cite{original_paper}} & \multicolumn{3}{c|}{Our Algorithm} \\
\cline{2-7}
& Predicted & Measured & Time(s) & Predicted & Measured & Time(s) \\ \hline\hline
\multirow{5}{*}{altitude} & $867$ & $\mathbf{867}$ & $\multirow{5}{*}{11.7}$ & $909\pm 1.0\times57.0$ & $\mathbf{867}$ & $\multirow{5}{*}{14.4}$ \\ \cline{2-3} \cline{5-6}
 & $789$ & $789$ & & $815\pm 1.0\times57.0$ & $758$ &  \\ \cline{2-3} \cline{5-6}
 & $776$ & $751$ & & $732\pm 1.0\times57.0$ & $789$ &  \\ \cline{2-3} \cline{5-6}
 & $659$ & $763$ & & $719\pm 1.0\times57.0$ & $737$ &  \\ \cline{2-3} \cline{5-6}
 & $581$ & $581$ & & $638\pm 1.0\times57.0$ & $581$ &  \\ \cline{2-3} \cline{5-6}
 \hline\hline
 \multirow{5}{*}{stabilisation} & $4387$ & $3697$ & $\multirow{5}{*}{26.4}$ & $4303\pm 2.0\times343.0$ & $3599$ & $\multirow{5}{*}{77.2}$ \\ \cline{2-3} \cline{5-6}
 & $4293$ & $4036$ &  & $4302\pm 2.0\times343.0$ & $\mathbf{4046}$ & \\ \cline{2-3} \cline{5-6}
 & $4290$ & $3516$ &  & $4285\pm 2.0\times343.0$ & $3944$ &  \\ \cline{2-3} \cline{5-6}
 & $4286$ & $3242$ &  & $4284\pm 2.0\times343.0$ & $3516$ &  \\ \cline{2-3} \cline{5-6}
 & $4196$ & $3612$ &  & $4248\pm 2.0\times343.0$ & $3697$ &  \\ \cline{2-3} \cline{5-6}
 \hline\hline
\multirow{5}{*}{automotive} & $13595$ & $8106$ & $\multirow{5}{*}{47.0}$ & $11824\pm 2.0\times1116.0$ & $10982$ & $\multirow{5}{*}{93.8}$ \\ \cline{2-3} \cline{5-6}
 & $11614$ & $9902$ & & $11696\pm 2.0\times1116.0$ & $10657$ &  \\ \cline{2-3} \cline{5-6}
 & $11515$ & $\mathbf{11515}$ & & $11424\pm 2.0\times1116.0$ & $10577$ & \\ \cline{2-3} \cline{5-6}
 & $11361$ & $5010$ & & $11338\pm 2.0\times1116.0$ & $\mathbf{11515}$ &  \\ \cline{2-3} \cline{5-6}
 & $11243$ & $11138$ &  & $9830\pm 2.0\times1116.0$ & $9263$ &  \\ \cline{2-3} \cline{5-6}
\hline \hline
\multirow{5}{*}{cctask} & $991$ & $808$ & $\multirow{5}{*}{29.4}$ & $870\pm 2.0\times73.0$ & $861$ & $\multirow{5}{*}{138.4}$ \\ \cline{2-3} \cline{5-6}
 & $972$ & $605$ & & $869\pm 2.0\times73.0$ & $858$ & \\ \cline{2-3} \cline{5-6}
 & $943$ & $852$ &  & $866\pm 2.0\times73.0$ & $\mathbf{897}$ & \\ \cline{2-3} \cline{5-6}
 & $936$ & $848$ &  & $865\pm 2.0\times73.0$ & $\mathbf{897}$ & \\ \cline{2-3}\cline{5-6}
 & $924$ & $821$ & & $861\pm 2.0\times73.0$ & $873$ & \\ \cline{2-3} \cline{5-6}
 \hline\hline
\multirow{5}{*}{irobot} & $1462$ & $1430$ & $\multirow{5}{*}{50.9}$ & $1451\pm 2.0\times37.0$ & $1406$ & $\multirow{5}{*}{269.0}$ \\ \cline{2-3} \cline{5-6}
 & $1459$ & $1463$ &  & $1450\pm 2.0\times37.0$ & $1411$ &  \\ \cline{2-3} \cline{5-6}
 & $1457$ & $1418$ &  & $1449\pm 2.0\times37.0$ & $1411$ &  \\ \cline{2-3} \cline{5-6}
 & $1454$ & $1451$ &  & $1448\pm 2.0\times37.0$ & $1411$ &  \\ \cline{2-3} \cline{5-6}
 & $1454$ & $1463$ &  & $1448\pm 2.0\times37.0$ & $\mathbf{1464}$ & \\ \cline{2-3} \cline{5-6}
 \hline\hline
\multirow{5}{*}{sm} & $2553$ & $\mathbf{2550}$ & \multirow{5}{*}{$211.0$} & $2552\pm 21.81\times0.2$ & $\mathbf{2550}$ & \multirow{5}{*}{$3290.2$} \\ \cline{2-3} \cline{5-6}
 & $2551$ & $\mathbf{2550}$ &  & $2551\pm 21.81\times0.2$ & $\mathbf{2550}$ &  \\ \cline{2-3} \cline{5-6}
 & $2536$ & $2537$ &  & $2536\pm 21.81\times0.2$ & $2537$ &  \\ \cline{2-3} \cline{5-6}
 & $2534$ & $2537$ &  & $2536\pm 21.81\times0.2$ & $2537$ &  \\ \cline{2-3} \cline{5-6}
 & $2532$ & $2537$ &  & $2531\pm 21.81\times0.2$ & $2537$ &  \\ \cline{2-3} \cline{5-6}
\hline

\end{tabular}
\end{minipage}
\end{table*}

\section{Conclusion}

In this paper, we have addressed the problem of estimating the worst-case timing of a program via systematic testing
on the target platform. Our approach not only generates an estimate of worst-case timing, but can also produces test
cases showing how that timing is exhibited on the platform. 
Our approach improves the accuracy of the previously published GameTime approach, while also
providing error bounds on the estimate. 

Note that our approach can be adapted to produce timing estimates along arbitrary program paths.
In order to do this, one can fix variables $b_e$ in Problem~\ref{problem:worst} suitably. 
Thus, we can also estimate the longest execution of a given path that is consistent with the measurements.

In the paper we have analyzed the timing behavior of a given program. However, instead of measuring cycles we can measure energy consumption of the program executions. The same techniques can then be applied to find the input consuming the most energy. In general, the approach presented in this paper can also be extended to other quantitative properties of the program and is not limited only to the WCETT analysis.

\bibliographystyle{IEEEtran}
\bibliography{bib,gametime}

%

%


\end{document}